\newtheorem{theorem}{Theorem}[section]
\newtheorem{proposition}[theorem]{Proposition}
\newtheorem{corollary}[theorem]{Corollary}
\newtheorem{remark}[theorem]{Remark}
\begin{document}

\preprint{ZMP-XX-2016 \\ DESY16-186\\}
                     
\title{Phases of $\mathcal{N}=4$ SYM, S-duality \\ and Nilpotent Cones}
\author{Aswin Balasubramanian\address{{ DESY Theory \\Notkestr. 85\\22607 Hamburg}}\address{{ Department of Mathematics \\ University of Hamburg\\ Bundesstr. 55 \\
20146 Hamburg}}}

\Abstract{
In this note, I describe the space of vacua $\mathcal{V}$ of four dimensional $\mathcal{N}=4$ SYM on $\mathbb{R}^4$ with gauge group a compact simple Lie Group $G$ as a stratified space. On each stratum, the low energy effective field theory is different. This language allows one to make precise the idea of moving in the space of vacua $\mathcal{V}$.  A particular subset of the strata of $\mathcal{N}=4$ SYM can be efficiently described using the theory of sheets in a Lie algebra. For these strata, I study the conjectural action of S-duality. I also indicate some benefits of using such a language for the study of the available space of vacua $\overline{\mathcal{V}}$ on the boundary of GL twisted $\mathcal{N}=4$ SYM on a half-space $\mathbb{R}^3 \times \mathbb{R}^+$. As an application of boundary symmetry breaking, I indicate how a) the Local Nilpotent Cone arises as part of the available symmetry breaking choices on the boundary of the four dimensional theory and b) the Global Nilpotent Cone arises in the theory reduced down to two dimensions on a Riemann Surface $C$. These geometries play a critical role in the Local and Global Geometric Langlands Program(s). }

\maketitle \newpage
\tableofcontents

\section{Introduction}

A natural question that arises when studying S-duality of $\mathcal{N}=4$ SYM is whether symmetry breaking followed by duality is the same as duality followed by symmetry breaking. One motivation for this note is to answer this question in the affirmative for a class of phases of $\mathcal{N}=4$ SYM for arbitrary $G$. More accurately, I show that the nested family of S-duality conjectures that interchange phases of $\mathcal{N}=4$ with gauge group $G$ and the phases of $\mathcal{N}=4$ SYM with gauge group $G^\vee$ are such that symmetry breaking followed by duality lands one in the same phase of the same theory as does duality followed by symmetry breaking. Here, $G$ and $G^\vee$ are such that their root data are related by Langlands/Goddard-Nyuts-Olive duality.

 To achieve this, I use the natural stratification on the moduli space of vacua $\mathcal{V}$. This makes the statement (encapsulated as Proposition \ref{dualityofstrata}) much clearer and amenable to further extensions.  A second motivation is to show that two important Geometric objects, the Local Nilpotent Cone and the Global Nilpotent Cone arise as part of the available choice of symmetry breaking patterns for the GL twisted theory on a half space.

In the study of  non-perturbative dynamics of Supersymmetric Quantum Field Theories, the existence of a moduli space of vacua in the full quantum theory has often played an important role \cite{Seiberg:1994bz}.  One particular example of such a moduli space is that of the Coulomb Branch\cite{seiberg1994electric,seiberg1994monopoles,Seiberg:1997ax}. It is defined to be the moduli space of vacua in which the low energy theory is described by interacting abelian vector multiplets. For the $\mathcal{N}=4$ theory, this happens to be the only vacuum moduli space. The idea of a Coulomb Branch has also played an important role in several closely associated subjects like the study of TQFTs arising from Supersymmetric Quantum Field Theories. For example, the Coulomb Branch plays an important role in the relationship between Donaldson-Witten TQFT and Seiberg-Witten TQFT \cite{Moore:1997pc}. In Section \ref{langlands}, we study its role in another 4d TQFT, namely the one arising from GL twist of $\mathcal{N}=4$ SYM.

In its varied appearances, the fact that the Coulomb Branch is a stratified space is often left implicit. In this note, I want to make the stratification very explicit. For this purpose, it becomes necessary to depart from what has become standard terminology. We will reserve the letter $G$ for the compact group and use $G_\mathbb{C}$ to denote its complexification and assign $\mathfrak{g}$ to be the associated complex Lie algebra.

In the standard terminology that is used in the study of moduli spaces of vacua, no difference is retained between the following two objects
\begin{itemize}
\item The space of VEVs $\mathcal{V}$ for scalar fields which is actually $r$ copies of the space of semi-simple adjoint ($G$-) orbits in the Lie Algebra $\mathfrak{g}$, where $r$ is the number of adjoint scalars.
\item The parameter space of labels for semi-simple adjoint orbits. This is the same as $(\mathfrak{h}/W)^r$, where $r$ is the number of adjoint scalars and $\mathfrak{h}$ is the Cartan subalgebra of $\mathfrak{g}$. This follows from the standard theory of adjoint orbits \cite{collingwood1993nilpotent} applied to semi-simple orbits. Equivalently, this is the space of Chiral Invariants $\mathcal{B}$ whose co-ordinates are the Weyl Invariant polynomials on $\mathfrak{h}$. For example, if $\mathfrak{g}$ is of Cartan type $A_n$, the space $\mathcal{B}$ is parameterized by gauge invariant polynomials built out of the scalars $Tr(\Phi_i^2), Tr(\Phi_i^3), \ldots Tr(\Phi_i^{n+1})$.
\end{itemize}

Both $\mathcal{V}$ and $\mathcal{B}$ are usually called \textit{Coulomb Branches}. For the purposes of this note, it will be important to distinguish between the above two and we will do so from this point. So, let us denote the space of VEVs by $\mathcal{V}$ and the parameter space of Invariant Polynomials as $\mathcal{B}$. I will reserve the word \textit{space of vacua} for the former while using the terms \textit{chiral invariants} or \textit{Coulomb branch} to denote the latter. As we will see, there is value in talking about orbits even in the case where only semi-simple orbits are involved. The relationship between the two is 1:1 when one considers a VEV which is a semi-simple orbit. But, the value in remembering the orbit information is that it is much easier to read the unbroken gauge group on an arbitrary stratum of $\mathcal{V}$ from the orbit than it is from the space of invariant polynomials. In principle, one can retrieve the same information by considering strata in $\mathfrak{h}/W$ and deducing the Weyl group of the unbroken Gauge Group from the stratum in $\mathfrak{h}/W$ that one is in (see Sec \ref{strategy} below).

The strata of $\mathcal{V}$ are described in Section \ref{bulk} using the language of adjoint orbits, especially that of sheets in the complex lie algebra. In order to keep this note short, I have avoided introducing these subjects in great detail. Such an introduction can be found in \cite{collingwood1993nilpotent} \cite{de2009induced} and in \cite{masses} where the theory of sheets is used in a different context. The theory itself was introduced in \cite{borho1979bahnen}.

\section{Bulk Symmetry Breaking and the Strata of $\mathcal{V}$}
\label{bulk}

We can restrict to the case where the gauge group $G$ is compact and simple. When $G$ is semi-simple, the following analysis can be done for each simple factor.

Now, the bosonic part of the action for the $\mathcal{N}=4$ theory on $\mathbb{R}^4$ is
\begin{equation}
\mathcal{S} = \frac{1}{g^2_{YM}} \int dx^4 \text{ Tr} \bigg( \frac{1}{2} F_{\mu \nu} F^{\mu \nu} + D_{\mu} \phi_i D^{\mu} \phi_i + \frac{1}{2} \sum_{i,j=1}^{6} [\phi_i,\phi_j]^2  \bigg)
\end{equation}

where $\mu=0,1,2,3$ are the spacetime indices and $i,j=1 \ldots 6$ are $\mathcal{R}$-symmetry indices. The scalars $\phi_i$ transform in the six dimensional representation of $\mathfrak{so}(6)$. We are interested in the space of Lorentz invariant vacua of the $\mathcal{N}=4$ theory with a $\mathfrak{so}(6)_R$ symmetry. The $\mathfrak{so}(6)_R$ symmetry that emerges in the IR is generically different from the $\mathfrak{so}(6)_R$ of the UV theory with gauge group $G$. The scalar potential in the Lagrangian is quadratic and it admits a family of solutions. Upto gauge invariance, this space of vacua $\mathcal{V}$ can be identified with the space of adjoint orbits in the lie algebra $\mathfrak{g}$ subject to the condition $[\phi_i,\phi_j]^2=0$. Since gauge symmetry is a redundancy in the theory, one identifies an entire adjoint orbit of the VEV $\langle \phi_i \rangle$ to be one single vacuum. Now, the condition $[\phi_i,\phi_j]^2=0$ implies that the adjoint orbit has to be abelian. Usually, this is taken to mean the VEV $\langle \phi_i \rangle$ is valued in a semi-simple orbit. It is technically possible to find non-semi-simple orbits that are abelian but we will not consider them here. The discussion so far was concerning the classical moduli space of vacua. One of the remarkable features of the $\mathcal{N}=4$ theory is that the scalar potential is not quantum corrected \cite{Seiberg:1988ur}. This implies that the moduli space of vacua persists in the quantum theory.

While this note is focused on just the moduli space of vacua of 4d $\mathcal{N}=4$ SYM with gauge group $G$, the broader statements describing the space of vacua as a stratified space apply directly to theories with lower SUSY and the description that follows is likely to be of added value in this larger context. 
 \subsection{General Properties of a Stratification of $\mathcal{V}$}
 
 Before delving into a detailed study of the strata, it is helpful to outline what general properties should be obeyed by a stratification of $\mathcal{V}$. Let $\mathcal{V} = \sqcup_{\alpha_i} \mathcal{V}_{\alpha_i}$ be a decomposition of the $\mathcal{V}$ into strata $\mathcal{V}_{\alpha_i}$ for a partially ordered set $\alpha_i$. Associated to each stratum is an unbroken gauge group $G_{\alpha_i}$ which encodes the massless degrees of freedom  and a collection of W-bosons $\mathbb{W}_{\alpha_i}$ which encodes the part of the Gauge symmetry that was Higgsed. The partial order on the set $\alpha_i$ encodes the possible Higgsing patterns. For example, if a stratum $\mathcal{V}_{\alpha_1}$ can be reached by symmetry breaking from $\mathcal{V}_{\alpha_2}$, then we say that $\alpha_1 < \alpha_2$ in the partial order.
 
 On general physical grounds, it is natural to demand the following properties hold for the decomposition
 
 \begin{enumerate}[(A)]
 \item The low energy effective field theory (EFT) varies smoothly along a fixed stratum $\mathcal{V}_\alpha$. In other words, no new massless gauge fields appear as we move in a fixed stratum. 
 \item If there exists two distinct strata $\mathcal{V}_{\alpha},\mathcal{V}_{\beta}$ such that their unbroken gauge groups $G_\alpha,G_\beta$ can be further Higgsed to a common gauge group $G_\gamma$ corresponding to a stratum $\mathcal{V}_{\gamma}$, then the two different ways of landing on the stratum $V_\gamma$ yield the same set of W-bosons $\mathbb{W}_{\gamma}$ (upto any EM dualities of the low energy theory on $\mathcal{V}_\gamma$).
 \end{enumerate}
 
 When the UV theory has a Lagrangian, a stratification obeying the above properties is provided by the ordinary Higgs mechanism. We will call this the ``natural stratification" of $\mathcal{V}$. In these cases, it also follows that the condition (B) implies condition (A). But, the idea of a moduli space of vacua is often useful in studying theories with no known UV Lagrangians and in such cases, it is not a priori clear as to what the analogues of W-bosons are and if (a version of) condition (B) can automatically imply condition (A). So, it may be useful to think of them as two separate conditions.

\subsection{Alignment of VEVs}

For generic VEVs for the six scalars, the gauge group is broken to $U(1)^{\text{rank}(G)}$. But, we want to study vacua at non-generic values of the VEV. To reach the other interesting vacua, one immediately realizes that the scalar VEVs have to obey certain alignment conditions \footnote{This is similar to the appearance of a condition for vacuum alignment in MSSM where VEVs of the two scalar fields in the theory should be aligned for $U(1)_{EM}$ to remain unbroken after Electroweak symmetry breaking. Such an alignment is energetically favored by the quartic part of the potential for scalars in MSSM. In the present context, there is no quartic term. So, we are just choosing to restrict ourselves to this subset of the vacua.}. So, henceforth, we will concentrate only on those vacua of $\mathcal{N}=4$ that obey some alignment condition that allows some non-abelian gauge group to remain. In fact, we will be even more restrictive. First, it is convenient to combine the six scalars into the three complex scalars. These scalar VEVs are now semi-simple elements in the complex lie algebra. So, from the nature of the scalar potential, we conclude that the space of vacua $\mathcal{V}$ is identified with three copies of the space of semi-simple adjoint orbits. We call a vacuum of $\mathcal{N}=4$ an \textit{aligned vacuum} if the VEVs are all valued in the same sheet of the complex lie algebra (the notion of a sheet is recalled in the Section below). While it would be extremely interesting to study the action of S-duality on the entire set of vacua, we restrict ourselves only to the set of aligned vacua in the rest of the note.

\subsection{The Strategy : Strata of $\mathcal{V}$ and $\mathcal{B}$ using Sheets}
\label{strategy}

Our strategy for describing the aligned vacua of $\mathcal{N}=4$ SYM is the following. Every semi-simple $G$ orbit in the compact real form of $\mathfrak{g}$ can be mapped to a corresponding semi-simple $G_\mathbb{C}$-orbit in the $\mathfrak{g}$\footnote{This property does not extend to other orbits in the complex lie algebra. For example, while there are nilpotent orbits in the complex lie algebra for the $G_\mathbb{C}$ action, there are no analogs in the compact case. }. And both of these can be mapped to strata in the space of Weyl invariant polynomials $(\mathfrak{h}/W)$ by computing the characteristic polynomial. When the characteristic polynomial does not factorize, we are on the top stratum of $\mathfrak{h}/W$. When eigenvalues start to repeat, then the characteristic polynomial would factorize accordingly and one lands on lower strata of $\mathfrak{h}/W$.  This map from semi-simple orbits to strata in $\mathfrak{h}/W$ is many to one. In fact, it is infinitely many to one. If we change the eigenvalues of the semi-simple element (corresponding to changing masses of W-bosons), we would be changing the semi-simple orbit. But, as long as the pattern of eigenvalues remains identical, we would land on the same stratum of $\mathfrak{h}/W$ when we compute the characteristic polynomial. So, looking to label phases by semi-simple orbits themselves would not be a wise idea. Instead, we want to identify the different phases of the physical theory by finding a way to label the strata in $\mathfrak{h}/W$.

 This is where the theory of sheets turns out to be of use. While there are an infinite number of semi-simple orbits in a Lie algebra, there are only a finite number of sheets containing semi-simple orbits. The theory of sheets concerns adjoint orbits (for the $G_\mathbb{C}$ action) in the complex lie algebra. Specifically, a sheet is a union of adjoint orbits of the same dimension. We are concerned here with sheets that have semi-simple elements in them and these are called Dixmier sheets \cite{de2009induced,masses}. Associated to every Dixmier sheet is a Levi subalgebra $\mathfrak{m}$ that centralizes any semi-simple element in the sheet. Under the adjoint quotient map (the eigenvalues map), each Dixmier sheet is mapped to a distinct stratum in $\mathfrak{h}/W$. This parameterization of strata in $\mathfrak{h}/W$ by the centralizing Levi is very convenient for the purpose of the physical question at hand. The unbroken gauge group associated to a fixed stratum in $\mathfrak{h}/W$ is precisely the compact form of the Levi subalgebra associated to the stratum in $\mathfrak{h}/W$ by the theory of sheets. When such a procedure is carried out for every adjoint scalar that has acquired a non-zero VEV, one would go from describing a stratum in $\mathcal{B}$ to a stratum of $\mathcal{V}$. In terms of stratification theory, the resulting stratification of $\mathcal{V}$ (where strata are labeled by the unbroken gauge group) is an example of an isotropy/orbit-type stratification \cite{mather1972stratifications,duistermaat2012lie}.

 One would go in the other direction in the following way.  Let $\mathcal{B}$ be the space of chiral invariants of the phase of theory in which the gauge group $G$ is broken down to $U(1)^{\text{rank}(G)}$. Let $\mathcal{B}_M$ be the space of chiral invariants  of the phase of the theory with unbroken gauge group $M$ where $M$ has a non-trivial semi-simple part. Now, $\mathcal{B}_M$ can be identified with the space of $W(M^{op})$-invariant polynomials, where $W(M^{op})$ is the Weyl group of the semi-simple part of $M^{op}$ where $M^{op}$ is the opposite Levi subalgebra to $M$. Here, ``opposite Levi" is taken to mean that if $\Pi \subset \Sigma$ is the subset of simple of roots corresponding to the Levi $M$, then $M^{op}$ has $\Sigma \setminus \Pi$ as its set of simple roots. And $\mathcal{B}_M$ is a particular stratum in $\mathcal{B}$. Two extreme cases are when $M=U(1)^{rank(G)}$, where we have $\mathcal{B}_M=\mathcal{B}$ and when $M=G$ where we have that $\mathcal{B}_M$ is the ``0'' stratum of $\mathcal{B}$. But, this method is not always transparent when it comes to identifying the unbroken gauge group. For example, when the unbroken gauge group is of type B or C, knowing just the Weyl group of the unbroken Gauge Group is not sufficient to determine the unbroken gauge group. So, it is a lot more convenient to just study the strata of $\mathcal{V}$. 

 As recalled above, the sheets that contain semi-simple orbits are called Dixmier sheets in Lie theory. From the physical perspective, each Dixmier sheet corresponds to a particular choice for what the unbroken Gauge Group is. For example, if the VEV $\langle \phi_i \rangle$ is valued in the principal Dixmier sheet, then the unbroken gauge group is $U(1)^{rank(G)}$. On the other extreme, if the VEV $\langle \phi_i \rangle =0$, we are in the case with no symmetry breaking and the gauge group is $G$. We will call the stratum of $\mathcal{V}$ where the Gauge Group is $U(1)^{rank(G)}$ to be the lowest stratum of $\mathcal{V}$ and the stratum with gauge group $G$ the top stratum of $\mathcal{V}$.

\begin{remark} It must be noted here that the idea of thinking about the space of vevs (upto gauge equivalence) as a stratified space is, by itself, not new. For example, see the review by L. Michel \cite{michel1980symmetry} and Section 9 of Slansky's report \cite{Slansky:1981yr} where this language is discussed in great detail. The theories most directly relevant to  these works are GUT models where there is one preferred vacuum in the quantum theory once the UV theory is fixed. The language of orbits allows for a discussion of the available choices for this vacuum in a way that does not depend too much on the detailed form of the Higgs potential in the UV theory.  The context of the present work is vastly different. One is talking of special Quantum Field Theories in which a moduli space of vacua has survived in the quantum theory (the potential remains quadratic) and it makes sense to \textit{move} (ie change the VEV in the UV theory and thereby change the RG flow from the UV to the IR) in the moduli space of vacua. 
\end{remark}

\section{S-duality}

Fix $M$ to be the subgroup of $G$ that arises as the unbroken gauge group in one of the phases of the Gauge Theory. By ordinary Higgs mechanism, we mean the symmetry breaking deformation of the $\mathcal{N}=4$ theory triggered by a VEV of the lie algebra valued scalar $\langle \phi \rangle$.

\begin{proposition} (``\textbf{Duality of Strata}")   The action of S-duality and that of ordinary Higgs Mechanism triggered by a VEV for the scalar $\langle \phi_i \rangle$ commute. Equivalently, the strata of $\mathcal{V}$ in the theory with gauge group $G$ are in a natural one to one correspondence with the strata  $\mathcal{V}^\vee$ of the dual theory with gauge group $G^\vee$.
\label{dualityofstrata}
\end{proposition}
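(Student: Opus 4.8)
The plan is to reduce the Proposition to a purely combinatorial statement about Levi subalgebras, settle that statement using the duality of root data, and feed in the conjectural S-duality of the low-energy theories as the single physical input. Recall from Section~\ref{strategy} that the aligned-vacua strata of $\mathcal{V}$ are in bijection with the Dixmier sheets in $\mathfrak{g}$, and that each such sheet is labeled by the conjugacy class of a Levi subalgebra $\mathfrak{m}$ (the centralizer of the semisimple elements of the sheet), with unbroken gauge group the compact form of $M$. The identical description applies to $\mathcal{V}^\vee$ with $\mathfrak{g}$ replaced by $\mathfrak{g}^\vee$. Hence, to prove the second (equivalent) form of the statement, it suffices to produce a canonical bijection between conjugacy classes of Levi subalgebras of $\mathfrak{g}$ and of $\mathfrak{g}^\vee$ that is realized by S-duality and respects the partial order given by Higgsing.

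First I would construct the bijection. Conjugacy classes of Levi subalgebras of $\mathfrak{g}$ correspond to $W$-orbits of subsets $\Pi \subseteq \Sigma$ of simple roots. Langlands/GNO duality supplies a canonical identification of the simple roots $\Sigma$ of $\mathfrak{g}$ with the simple coroots of $\mathfrak{g}^\vee$ (i.e.\ the simple roots $\Sigma^\vee$ of $\mathfrak{g}^\vee$) via $\alpha \mapsto \alpha^\vee$, together with the canonical isomorphism $W \cong W^\vee$ acting compatibly on the dual lattices. A subset $\Pi \subseteq \Sigma$ therefore determines a subset $\Pi^\vee \subseteq \Sigma^\vee$, and $W$-conjugacy is preserved, so one obtains the map $M \mapsto M^\vee$ sending a Levi of $\mathfrak{g}$ to its Langlands dual Levi in $\mathfrak{g}^\vee$. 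This is an involutive bijection and carries the extreme strata correctly: $M=G \leftrightarrow M^\vee = G^\vee$ (the top, unbroken stratum) and $M=T \leftrightarrow M^\vee = T^\vee$ (the lowest, maximally Higgsed $U(1)^{\mathrm{rank}}$ stratum).

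Next I would identify this combinatorial map with S-duality and check commutativity. The physical input is the nested family of S-duality conjectures: the low-energy $\mathcal{N}=4$ theory on a stratum, whose gauge group is the unbroken $M$, is S-dual to the $\mathcal{N}=4$ theory with gauge group $M^\vee$. Thus S-duality sends the stratum labeled by $M$ precisely to the one labeled by $M^\vee$, realizing the bijection above. For commutativity with ordinary Higgsing one checks that the partial order is preserved: a further Higgsing from $M$ corresponds to passing to a Levi subalgebra of $M$, i.e.\ to an inclusion of classes $\Pi' \subseteq \Pi \subseteq \Sigma$; since $\Pi \mapsto \Pi^\vee$ preserves inclusions, $M'^\vee$ is a Levi of $M^\vee$, and dualize-then-Higgs agrees with Higgs-then-dualize. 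This is exactly the commuting square asserted in the first form of the Proposition.

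The main obstacle I anticipate is not the combinatorics but the \emph{coherence} of this picture against the finer data attached to each stratum, namely the W-boson content $\mathbb{W}$ and the chiral-invariant slice $\mathcal{B}_M$. Concretely, one must verify property~(B) in the dual frame: the W-bosons Higgsed in breaking $G \to M$ are labeled by the roots $\Phi \setminus \Phi_M$, and under duality these should map to the coroots $\Phi^\vee \setminus \Phi^\vee_{M^\vee}$ Higgsed in $G^\vee \to M^\vee$, consistently with the ``up to EM duality'' caveat in~(B) and with the action of S-duality on $\tau$ and on the electric/magnetic charge lattice. A related point to track carefully is that $\mathcal{B}_M$ is governed by the \emph{opposite} Levi $M^{op}$ (with simple roots $\Sigma \setminus \Pi$), so one should confirm that the correspondence is unchanged whether strata are bookkept through $\mathcal{V}$ (via $M$) or through $\mathcal{B}$ (via $M^{op}$); since complementation of subsets commutes with $\Pi \mapsto \Pi^\vee$, this is consistent, but it is the step where an orientation error would most plausibly hide.
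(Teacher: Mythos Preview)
Your proposal is correct and follows essentially the same approach as the paper: identify strata with Dixmier sheets, hence with conjugacy classes of Levi subalgebras, and then use that Levi subalgebras are classified (up to conjugacy) by subsets of simple roots together with the Langlands identification $\Sigma \leftrightarrow \Sigma^\vee$, $W \cong W^\vee$ to produce the bijection $M \mapsto M^\vee$. Your treatment is in fact more thorough than the paper's, which stops at establishing the bijection; your explicit verification that $\Pi \mapsto \Pi^\vee$ preserves inclusions (so Higgs-then-dualize equals dualize-then-Higgs) and your coherence remarks about $\mathbb{W}$ and $M^{op}$ are welcome additions that the paper leaves implicit.
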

\begin{proof}
 The symmetry breaking is triggered by a semi-simple element in the lie algebra $\mathfrak{g}$. The unbroken gauge group in this case is a compact group $M$ whose associated complex lie algebra $\mathfrak{m}$ is a Levi subalgebra of the Lie algebra $\mathfrak{g}$. There is a unique Dixmier sheet associated with a choice of a Levi subalgebra. This Dixmier sheet is a particular stratum in the complex lie algebra $\mathfrak{g}$ consisting of a family of semi-simple adjoint orbits that have a same unbroken gauge group. So, this also corresponds to a particular stratum in $\mathcal{V}$.  Now, from the defining relationship between the root systems of $G^\vee$ and $G$ and the fact that Levi subalgebras are classified (upto conjugacy) by proper subsets of the set of simple roots of the root system \cite{borel1949sous,humphreys2008representations}, we have that for every $\mathfrak{m} \subset \mathfrak{g}$, it is true that $\mathfrak{m}^\vee \subset \mathfrak{g}^\vee$. So, the Proposition follows.  \end{proof}

\begin{corollary} (Linguistic) For the phases in question, one can freely alternate between saying ``dual of symmetry breaking" and ``symmetry breaking in the dual theory'' from the point of view of identifying phases. \end{corollary}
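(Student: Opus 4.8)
The plan is to reduce the Proposition to a purely combinatorial bijection between conjugacy classes of Levi subalgebras on the two sides, exploiting the dictionary set up in Section~\ref{strategy}. First I would record the chain of identifications already established: the strata of $\mathcal{V}$ (restricted to aligned vacua) correspond to Dixmier sheets in $\mathfrak{g}$; each Dixmier sheet is determined by the conjugacy class of the Levi subalgebra $\mathfrak{m}$ centralizing a semisimple element of the sheet; and Levi subalgebras are classified up to conjugacy by subsets $\Pi \subseteq \Sigma$ of the simple roots. Thus each stratum of $\mathcal{V}$ carries a well-defined label $[\Pi]$, with the unbroken gauge group identified with the compact form of $\mathfrak{m}_\Pi$, and the Higgsing partial order realized as the inclusion order on the $\Pi$ (the top stratum $\Pi = \Sigma$ giving gauge group $G$, the bottom stratum $\Pi = \emptyset$ giving $U(1)^{\mathrm{rank}(G)}$).

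The second step is to feed in Langlands/GNO duality at the level of root data. Since the root datum of $G^\vee$ is obtained from that of $G$ by interchanging roots with coroots, the assignment $\alpha \mapsto \alpha^\vee$ furnishes a bijection between the simple roots of $G$ and those of $G^\vee$, hence a bijection $\Pi \leftrightarrow \Pi^\vee$ of subsets that descends to conjugacy classes. I would then use the Lie-theoretic fact that the Langlands dual of $\mathfrak{m}_\Pi$ is precisely the Levi $\mathfrak{m}_{\Pi^\vee} \subseteq \mathfrak{g}^\vee$; in particular $\mathfrak{m} \subseteq \mathfrak{g}$ forces $\mathfrak{m}^\vee \subseteq \mathfrak{g}^\vee$. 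Composing with the dictionary of the first step on the dual side yields the desired natural one-to-one correspondence between the strata of $\mathcal{V}$ and those of $\mathcal{V}^\vee$.

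To obtain the commutativity statement I would observe that the bijection $\Pi \leftrightarrow \Pi^\vee$ manifestly preserves inclusions, and hence the Higgsing partial order. Ordinary Higgs mechanism sends the stratum $[\Pi]$ to a deeper stratum $[\Pi']$ with $\Pi' \subseteq \Pi$ (more of the gauge group is broken), while S-duality acts horizontally by $[\Pi] \mapsto [\Pi^\vee]$. Order-preservation then makes the square
\[
\begin{array}{ccc}
[\Pi] & \longleftrightarrow & [\Pi^\vee] \\
\downarrow & & \downarrow \\
[\Pi'] & \longleftrightarrow & [(\Pi')^\vee]
\end{array}
\]
commute, which is exactly the assertion that symmetry breaking followed by duality lands in the same stratum as duality followed by symmetry breaking.

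I expect the main obstacle to be conceptual rather than computational. The Lie-theoretic bijection above is rigorous, but it changes the \emph{type} of a Levi (for instance interchanging factors of type $B$ and $C$), so one must check that the object physics calls ``the S-dual of the low-energy theory on the stratum $[\Pi]$'' really is the GNO-dual theory with gauge group the compact form of $\mathfrak{m}_{\Pi^\vee}$ — this identification of the low-energy duality with the naive dual of the unbroken group is the genuinely conjectural input, and it is where property~(B) of the stratification (consistency of the Higgsed W-bosons, now matched across duality) must be invoked. A secondary point to verify is that the correspondence is canonical, i.e.\ independent of the auxiliary choice of Cartan and Borel used to fix $\Sigma$, so that $\Pi \leftrightarrow \Pi^\vee$ is genuinely well defined on conjugacy classes.
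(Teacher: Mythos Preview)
The paper gives no separate proof of this Corollary: it is stated immediately after Proposition~\ref{dualityofstrata} as a direct linguistic consequence of the commutativity established there. Your proposal is, in effect, a (more detailed) reproof of that Proposition, and it follows the same route as the paper's own argument --- strata $\leftrightarrow$ Dixmier sheets $\leftrightarrow$ conjugacy classes of Levi subalgebras $\leftrightarrow$ subsets $\Pi\subseteq\Sigma$ of simple roots, combined with the bijection $\alpha\mapsto\alpha^\vee$ of simple roots under Langlands/GNO duality to obtain $\mathfrak{m}\subset\mathfrak{g}\Rightarrow\mathfrak{m}^\vee\subset\mathfrak{g}^\vee$.

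The only genuine addition in your write-up is the explicit observation that $\Pi\leftrightarrow\Pi^\vee$ preserves inclusions and hence the Higgsing partial order, together with your caveats about the conjectural physical input and canonicity. These are worthwhile elaborations (and the caveat about the physics identification is exactly the point the paper flags when it remarks that S-duality itself is conjectural), but they do not change the underlying argument.
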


It must be noted here that S-duality for $\mathcal{N}=4$ SYM is, by itself, still very much a conjectural statement (see \cite{Goddard:1976qe,Englert:1976ng,Montonen:1977sn} for some of the original conjectures and \cite{Olive:1995sw} for a review of this early period). Highly non-trivial evidence for the S-duality conjecture has been collected over the years in the form of study of monopole moduli spaces \cite{Sen:1994fa,sen1994dyon,gibbons1996sen,hitchin20002} and in the study of partition functions \cite{vafa1994strong,Wu:2008bv}. The former can be interpreted in terms of behaviour of the one particle BPS spectrum on the lowest stratum of $\mathcal{V}$ \cite{seiberg1994electric} while the latter concerns an observable computed in the top stratum of $\mathcal{V}$. For wider reviews with a special focus on the $\mathcal{N}=4$ theory, see \cite{Harvey:1996ur,Weinberg:2006rq}.

More recently, the defining relationship between the two groups $G$ and $G^\vee$ was interpreted as the S-duality between Wilson and 't-Hooft operators in the top stratum of $\mathcal{V}$ \cite{Kapustin:2005py}. One could list numerous other pieces of evidence here. But, the idea of listing a few is to observe that each piece of evidence is collected in a particular stratum of $\mathcal{V}$ (or for a pair of strata, one each on $G$ and $G^\vee$ sides). On the other hand, Proposition \ref{dualityofstrata} concerns how S-duality acts on the entire set of strata themselves.

It is instructive to see Proposition \ref{dualityofstrata} in action by including several examples  below. The case where $G$ is simply laced, the S-duality quite straightforward since each phase is self-dual and what follows is merely a listing of the Levi subalgebras in each case. We still list the Levi subalgebras since it is amusing to see the strata of $\mathcal{V}$ written out. The algorithm that one uses to obtain the Levi subalgebras of $\mathfrak{g}$ is explained in Appendix \ref{countinglevi}. In the non-simply laced cases, there are some non-trivial maps. The case of $G=G_2,F_4$ especially leads to some intriguing maps between the phases of the gauge theory. We label the phases only at the level of complex lie algebras. The corresponding gauge algebras are associated to the compact real forms of the complex lie algebra. Under S-duality, global data about the group like $\pi_1(G)$ and $Z(G)$ play an important role. When speaking of S-duality at the level of groups, this data is critical. But, in everything that follows, we use methods that are sensitive only to lie algebraic data. A finer study at the level of Groups (about which we comment briefly below) would require that we keep track of $\pi_1(G)$ and $Z(G)$. In what follows, when we describe a particular stratum of $\mathcal{V}$ as ``self-dual", it is to be understood as a statement at the level of lie algebras. And the residual Weyl groups symmetries are not indicated but can be inferred from the data in the pair $(\mathfrak{g},\mathfrak{m})$. While $\mathfrak{m}$ details the continuous gauge symmetries that remain, the residual Weyl group symmetry is an indicator of the part of the continuous gauge symmetry is ``broken''. 

\subsection{Type $A$}

In this case, the phases are self-dual. For example, in $G=SU(5)$, the phases are
\begin{center}
\begin{tabular}{|c|c|}
\hline 
$\mathfrak{g}$ & $\mathfrak{m}$ \\ 
\hline 
$A_4$ & $A_4$ \\ 
$A_4$ & $A_3+ \mathfrak{u}(1)$ \\ 
$A_4$ & $A_2+ A_1 + \mathfrak{u}(1)$ \\ 
$A_4$ & $A_1 + A_1 + 2\mathfrak{u}(1)$ \\ 
$A_4$ & $A_1 + 3\mathfrak{u}(1)$ \\ 
$A_4$ & $4\mathfrak{u}(1)$ \\ 
\hline 
\end{tabular} 
\end{center}

\subsection{Type $D$}

Here again, the phases are self-dual. Take as an example $G=SO(8), \mathfrak{g}=D_4$

\begin{center}
\begin{tabular}{|c|c|}
\hline 
$\mathfrak{g}$ & $\mathfrak{m}$ \\ 
\hline 
$D_4$ & $D_4$ \\
$D_4$ & $A_3^{'}+ \mathfrak{u}(1)$  \\
$D_4$ &  $A_3^{''} + \mathfrak{u}(1)$  \\  
$D_4$ & $D_3 + \mathfrak{u}(1)$ \\
$D_4$ & $A_2 + 2\mathfrak{u}(1)$ \\
$D_4$ & $3A_1$ \\
$D_4$ & $2A_1+2\mathfrak{u}(1)$ \\
$D_4$ & $2A_1^{'} + 2\mathfrak{u}(1)$ \\
$D_4$ &  $2A_1^{''} + 2\mathfrak{u}(1)$\\
$D_4$ & $A_1+3\mathfrak{u}(1)$\\
$D_4$ & $4\mathfrak{u}(1)$ \\
\hline 
\end{tabular} 
\end{center}

\subsection{Types $B$,$C$}

Langlands duality exchanges the types $B$ and $C$. Even for the principal Dixmier sheet, the semi-simple orbits of type $B$ and type $C$ are different orbits. And S-duality exchanges the corresponding two phases. Further, Dixmier sheets in Types $B$ and $C$ map to each other exactly under the S-duality map. Viewed purely from the point of the view of the space of Chiral Invariants, S-duality leaves things ``invariant". So, the non-trivial map between Dixmier sheets in type B to Dixmier sheets in type C leading to a highly non-trivial exchange between the phases of the gauge theories would be somewhat obscured if one is merely studying $\mathcal{B}$. For example, we have in $G=SO(9),G^\vee=Sp(10)$, the following phases. 

\begin{center}
\begin{tabular}{|c|c|c|c|}
\hline 
$\mathfrak{g}$ & $\mathfrak{m}$ & $\mathfrak{g}^\vee$ & $\mathfrak{m}^\vee$ \\ 
\hline 
$B_4$ & $B_4$ & $C_4$ & $C_4$  \\
$B_4$ & $A_3 + \mathfrak{u}(1)$ & $C_4$ & $A_3 + \mathfrak{u}(1)$ \\
$B_4$ & $B_3 + \mathfrak{u}(1)$ & $C_4$ & $C_3 + \mathfrak{u}(1)$\\
$B_4$ & $\tilde{A_1} + B_2 + \mathfrak{u}(1)$ & $C_4$ & $\tilde{A_1} + C_2 + \mathfrak{u}(1)$  \\
$B_4$ & $A_2 + A_1 + \mathfrak{u}(1)$ & $C_4$ & $A_2 + A_1 + \mathfrak{u}(1)$ \\
$B_4$ & $A_2 + \mathfrak{u}(1)^2$ & $C_4$ & $A_2 + \mathfrak{u}(1)^2$ \\
$B_4$ & $\tilde{A_1} + \tilde{A_1} + \mathfrak{u}(1)^2$ & $C_4$ & $\tilde{A_1} + \tilde{A_1} + \mathfrak{u}(1)^2$ \\
$B_4$ & $B_2 + \mathfrak{u}(1)^2$ & $C_4$ & $C_2 + \mathfrak{u}(1)^2$ \\
$B_4$& $A_1 + \tilde{A_1} + \mathfrak{u}(1)^2$  & $C_4$ & $A_1 + \tilde{A_1} + \mathfrak{u}(1)^2$  \\
$B_4$ & $\tilde{A_1}$ + $\mathfrak{u}(1)^3$ & $C_4$ & $\tilde{A_1}$ + $\mathfrak{u}(1)^3$  \\
$B_4$& $A_1$ + $\mathfrak{u}(1)^3$ & $C_4$ & $A_1$ + $\mathfrak{u}(1)^3$ \\
$B_4$ & $\mathfrak{u}(1)^4$ & $C_4$ & $\mathfrak{u}(1)^4$ \\
\hline 
\end{tabular} 
\end{center}


\subsection{Types $G_2$, $F_4$}

In \cite{argyres2006s}, it was noted that S-duality for the theory with $G=G_2, F_4$ acts by a non-trivial action on the top stratum of the space of invariants $\mathcal{B}$ (which corresponds to the lowest stratum of $\mathcal{V}$) while the theory on lowest stratum of $\mathcal{B}$ (which corresponds to the top stratum of $\mathcal{V}$) is self-dual. With the stratification at hand, we can now study the action of S-duality on intermediate strata. In both these theories, there is a non-trivial map between the intermediate strata even though we have a ``self-dual'' theory on the top stratum of $\mathcal{V}$. This is due to the fact that the long root and the short root are exchanged under S-duality. 

\begin{center}
\begin{tabular}{|c|c|c|c|}
\hline 
$\mathfrak{g}$ & $\mathfrak{m}$ & $\mathfrak{g}^\vee$ & $\mathfrak{m}^\vee$ \\ 
\hline 
$G_2$ & $G_2$ & $G_2$ & $G_2$  \\
$G_2$ & $A_1 + \mathfrak{u}(1)$ & $G_2$ & $\tilde{A_1} + \mathfrak{u}(1)$  \\
$G_2$ & $\tilde{A_1} + \mathfrak{u}(1)$ & $G_2$ & $A_1 + \mathfrak{u}(1)$  \\
$G_2$ & $2\mathfrak{u}(1)$ & $G_2$ & $2\mathfrak{u}(1)$  \\
\hline
\end{tabular} 
\end{center}

\begin{center}
\begin{tabular}{|c|c|c|c|}
\hline 
$\mathfrak{g}$ & $\mathfrak{m}$ & $\mathfrak{g}^\vee$ & $\mathfrak{m}^\vee$ \\ 
\hline 
$F_4$ & $F_4$ & $F_4$ & $F_4$  \\
$F_4$ & $B_3+\mathfrak{u}(1)$ & $F_4$  & $C_3+\mathfrak{u}(1)$   \\
$F_4$  &  $C_3+\mathfrak{u}(1)$   & $F_4$  & $B_3+\mathfrak{u}(1)$  \\
 $F_4$  & $A_2+\tilde{A}_1+\mathfrak{u}(1)$ &  $F_4$   & $A_2+\tilde{A}_1+\mathfrak{u}(1)$  \\
 $F_4$   & $\tilde{A}_2+A_1+\mathfrak{u}(1)$ &  $F_4$  & $\tilde{A}_2+A_1+\mathfrak{u}(1)$  \\
$F_4$   & $C_2+2\mathfrak{u}(1)$  & $F_4$   & $B_2(\simeq C_2)+2\mathfrak{u}(1)$  \\
 $F_4$  & $A_2+2\mathfrak{u}(1)$  & $F_4$   & $\tilde{A_2}+2\mathfrak{u}(1)$  \\
 $F_4$  & $\tilde{A_2}+2\mathfrak{u}(1)$ &  $F_4$  & $A_2+2\mathfrak{u}(1)$  \\
$F_4$   & $A_1 + \tilde{A_1}+ 2 \mathfrak{u}(1)$ & $F_4$   & $A_1 + \tilde{A_1}+ 2 \mathfrak{u}(1)$  \\
 $F_4$  & $A_1+3\mathfrak{u}(1)$ &  $F_4$  & $\tilde{A_1}+3\mathfrak{u}(1)$  \\
 $F_4$  & $\tilde{A_1}+3\mathfrak{u}(1)$ &  $F_4$  & $A_1+3\mathfrak{u}(1)$  \\
 $F_4$  & $4\mathfrak{u}(1)$ & $F_4$   & $4\mathfrak{u}(1)$  \\
 \hline
\end{tabular} 
\end{center}

\subsection{Types $E_6$,$E_7$,$E_8$}

These are simply laced and it follows immediately that the Dixmier sheets map to each other under S-duality. 

\begin{center}
\begin{tabular}[t]{|c|c|}
\hline 
$\mathfrak{g}$ & $\mathfrak{m}$ \\ 
\hline 
$E_6$& $E_6$  \\
$E_6$ & $D_5+\mathfrak{u}(1)$  \\
$E_6$ & $A_5 + \mathfrak{u}(1)$ \\
$E_6$ & $A_4+A_1+\mathfrak{u}(1)$\\
$E_6$ & $D_4 + 2\mathfrak{u}(1)$\\
$E_6$ & $A_3+A_1+2\mathfrak{u}(1)$\\
$E_6$ & $A_4+2\mathfrak{u}(1)$\\
$E_6$ & $2A_2+A_1+\mathfrak{u}(1)$\\
$E_6$ & $A_3 + 3\mathfrak{u}(1)$\\
$E_6$ & $2A_2 + 2\mathfrak{u}(1)$\\
$E_6$ & $A_2 + 2A_1 + 2\mathfrak{u}(1)$\\
$E_6$ & $A_2+A_1 + 3\mathfrak{u}(1)$\\
$E_6$ & $3A_1+3\mathfrak{u}(1)$\\
$E_6$ & $A_2+ 4\mathfrak{u}(1)$\\
$E_6$ & $2A_1+4 \mathfrak{u}(1)$\\
$E_6$ & $A_1+ 5\mathfrak{u}(1)$\\
$E_6$ & $6\mathfrak{u}(1)$\\ 
\hline
\end{tabular}
\begin{tabular}[t]{|c|c|}
\hline 
$\mathfrak{g}$ & $\mathfrak{m}$ \\ 
\hline 
$E_7$& $E_7$  \\
$E_7$& $E_6+\mathfrak{u}(1)$ \\
$E_7$& $D_6+\mathfrak{u}(1)$ \\
$E_7$& $D_5+A_1+\mathfrak{u}(1)$ \\
$E_7$& $A_6+\mathfrak{u}(1)$ \\
$E_7$& $A_5+A_1+\mathfrak{u}(1)$ \\
$E_7$& $A_4+A_2+\mathfrak{u}(1)$ \\
$E_7$& $A_3+A_2+A_1+\mathfrak{u}(1)$ \\
$E_7$& $D_5+2\mathfrak{u}(1)$ \\
$E_7$& $D_4+A_1+2\mathfrak{u}(1)$ \\
$E_7$& $A_5'+2\mathfrak{u}(1)$ \\
$E_7$& $A_5''+2\mathfrak{u}(1)$ \\
$E_7$& $A_4+A_1+2\mathfrak{u}(1)$ \\
$E_7$& $A_3+A_2+2\mathfrak{u}(1)$ \\
$E_7$& $A_3+2A_1+2\mathfrak{u}(1)$ \\
$E_7$& $2 A_2 + A_1+2\mathfrak{u}(1)$ \\
$E_7$& $A_2+3A_1+2\mathfrak{u}(1)$ \\
$E_7$& $D_4+3\mathfrak{u}(1)$ \\
$E_7$& $A_4+3\mathfrak{u}(1)$ \\
$E_7$& $(A_3+A_1)'+3\mathfrak{u}(1)$ \\
$E_7$& $(A_3+A_1)''+3\mathfrak{u}(1)$ \\
$E_7$& $2A_2+3\mathfrak{u}(1)$ \\
$E_7$& $A_2 + 2A_1+3\mathfrak{u}(1)$ \\
$E_7$&  $4A_1+ 3\mathfrak{u}(1)$ \\
$E_7$&  $A_3+ 4\mathfrak{u}(1)$ \\
$E_7$& $A_2 + A_1+4\mathfrak{u}(1)$ \\
$E_7$& $3A_1'+4 \mathfrak{u}(1)$ \\
$E_7$& $3A_1''+4 \mathfrak{u}(1)$ \\
$E_7$&  $A_2+5 \mathfrak{u}(1)$ \\
$E_7$&  $2A_1+5 \mathfrak{u}(1)$ \\
$E_7$&  $A_1+6 \mathfrak{u}(1)$ \\
$E_7$&  $7\mathfrak{u}(1)$ \\
\hline
\end{tabular}
\end{center}

\begin{center}
\begin{tabular}[t]{|c|c|}
\hline 
$\mathfrak{g}$ & $\mathfrak{m}$ \\ 
\hline 
$E_8$& $E_8$  \\
$E_8$& $E_7+\mathfrak{u}(1)$  \\
$E_8$& $D_7+\mathfrak{u}(1)$  \\
$E_8$& $E_6+A_1+\mathfrak{u}(1)$  \\
$E_8$& $D_5+A_2+\mathfrak{u}(1)$  \\
$E_8$& $A_7+\mathfrak{u}(1)$  \\
$E_8$& $A_6+A_1+\mathfrak{u}(1)$  \\
$E_8$& $A_4+A_3+\mathfrak{u}(1)$  \\
$E_8$& $A_4+A_2+A_1+\mathfrak{u}(1)$  \\
$E_8$& $E_6+2\mathfrak{u}(1)$  \\
$E_8$& $D_6+2\mathfrak{u}(1)$  \\
$E_8$& $D_5+A_1+2\mathfrak{u}(1)$  \\
$E_8$& $D_4+A_2+2\mathfrak{u}(1)$  \\
$E_8$& $A_6+2\mathfrak{u}(1)$  \\
$E_8$& $A_5+A_1+2\mathfrak{u}(1)$  \\
$E_8$& $A_4+A_2+2\mathfrak{u}(1)$  \\
$E_8$& $2A_3+2\mathfrak{u}(1)$  \\
$E_8$& $A_4+2A_1+2\mathfrak{u}(1)$  \\
$E_8$& $A_3+A_2+A_1+2\mathfrak{u}(1)$  \\
$E_8$& $2A_2 + 2A_1+2\mathfrak{u}(1) $  \\
$E_8$& $D_5+3\mathfrak{u}(1)$  \\
\hline
\end{tabular}
\begin{tabular}[t]{|c|c|}
\hline 
$\mathfrak{g}$ & $\mathfrak{m}$ \\ 
\hline 
$E_8$& $D_4+A_1+3\mathfrak{u}(1)$  \\
$E_8$& $A_5+3\mathfrak{u}(1)$  \\
$E_8$& $A_4+A_1+3\mathfrak{u}(1)$  \\
$E_8$& $A_3+A_2+3\mathfrak{u}(1)$  \\
$E_8$& $A_3 + 2A_1+3\mathfrak{u}(1)$  \\
$E_8$& $2A_2 + A_1+3\mathfrak{u}(1)$  \\
$E_8$& $A_2+3A_1+3\mathfrak{u}(1)$  \\
$E_8$& $D_4+4\mathfrak{u}(1)$  \\
$E_8$& $A_4+4\mathfrak{u}(1)$  \\
$E_8$& $A_3+A_1+4\mathfrak{u}(1)$  \\
$E_8$& $2A_2+4\mathfrak{u}(1)$  \\
$E_8$& $A_2+2A_1+4\mathfrak{u}(1)$  \\
$E_8$& $4A_1+4\mathfrak{u}(1)$  \\
$E_8$& $A_3+5\mathfrak{u}(1)$  \\
$E_8$& $A_2+A_1+5\mathfrak{u}(1)$  \\
$E_8$& $3A_1+5\mathfrak{u}(1)$  \\
$E_8$& $A_2+6\mathfrak{u}(1)$  \\
$E_8$& $2A_1+6\mathfrak{u}(1)$  \\
$E_8$& $A_1+7\mathfrak{u}(1)$  \\
$E_8$& $8 \mathfrak{u}(1)$  \\
& \\
\hline
\end{tabular}
\end{center}

Here, it is of interest to note that in the S-duality map between $\mathbb{M}$-regular Surface Operators in $\mathcal{N}=4$ SYM, similar Lie theoretic data (esp the duality between Levi sub-groups in $G$ and $G^\vee$) enters the picture \cite{gukov2006gauge}. 

\subsection{A Group Analog of the Higgs Mechanism}

A conscientious reader would have noted at this point that everything has really been done only at the level of the Lie algebra\footnote{I thank Ashoke Sen for sharing some of his insights in this context. The speculations here (which are my own) are as a partial response to some of his remarks.}. The Gauge theory, on the other hand, depends on the group $G$. One way in which the dependence on the Group can be retrieved is by studying partition functions on compact four manifolds (\cite{vafa1994strong}). Another way would be by detailed study of available extended defect operators. For example, the spectrum of Line operators available in a gauge theory would encode the dependence of the gauge theory on the Group (see, for example \cite{Gaiotto:2010be,Aharony:2013hda}). For the purposes of this note, it would be desirable to speak, directly, of a Higgs mechanism using data that live in the Group (as opposed to passing to Lie algebraic data like $\langle \phi_i \rangle$ in an intermediate step). A Group analog of the Higgs Mechanism is likely to reveal the existence of other phases of $\mathcal{N}=4$ SYM which are not detectable by a study of just the UV Lagrangian since invoking the UV Lagrangian would automatically imply passing to Lie Algebraic data like $\langle \phi_i \rangle$. Here, certain tools that extend the usual Wilson/'t-Hooft criteria that have been developed in \cite{Gukov:2013zka} may be of use. These tools only use Group theoretic data to study properties of phases and one can presumably extend these tools to theories with Coulomb phases (this is also mentioned in \cite{Gukov:2013zka}). It might also be possible to extend the methods in such a way that one can use them to detect the phases using purely Group data. But we do not attempt to present such an extension here.

\section{Geometric Langlands}
\label{langlands}

In the gauge theory approach to Geometric Langlands initiated by Kapustin-Witten, the starting point is the S-duality of $\mathcal{N}=4$ SYM with gauge group $G$ and $\mathcal{N}=4$ SYM with gauge group $G^\vee$. Then, we consider a specific topological twist of the theory. This twist was first studied by Marcus and is now more commonly known as the GL twist \cite{Marcus:1995mq,kapustin2006electric}. The GL twist involves an embedding of the $Spin(4)' \rightarrow Spin(4) \times Spin(6)$ which is such that $\mathbf{4}$ of $Spin(6) \simeq SU(4)$ transforms as $(\mathbf{(2,1)} \oplus \mathbf{(1,2)})$ of $Spin(4)'$. As an embedding of $Spin(4)\simeq SU(2) \times SU(2)$ in $Spin(6) \simeq SU(4)$, it is the diagonal embedding. The six original scalars $\phi_i$ transform in the $\mathbf{6}$ of $Spin(6)$. After GL twist, the symmetry between the six scalars is broken. Four of the scalars, which we take to be $\phi_0,\phi_1,\phi_2,\phi_3$, transform as the components of an adjoint valued one form $\phi=\phi_\mu dx^\mu$. In fact, one can build a complex gauge field $\mathcal{A}= A+ i \phi$ using the gauge field $A_\mu$ and the scalar field $\phi_\mu$. The rest of the scalars combine to form an adjoint valued complex scalar $\sigma  = (\phi_4 + i \phi_5)/\sqrt{2}, \sigma = (\phi_4 - i \phi_5)/\sqrt{2} $. In the rest of our discussion, we will always set $\sigma$ to zero.

 We recall the purely bosonic part of the topological action from \cite{kapustin2006electric},
\begin{eqnarray}
\mathcal{S}_{top} &=& - \frac{1}{g^2_{YM}} \int d^4x \sqrt{g} Tr \bigg( \frac{1}{2}F_{\mu \nu}F^{\mu \nu}  + (D \phi)^2  \bigg)   \\  \nonumber
            &+& \frac{2}{g^2_{YM}} \int d^4x \sqrt{g} Tr \bigg( \frac{1}{2} [\bar{\sigma},\sigma]^2 - D_{\mu} \bar{\sigma} D^\mu \sigma - [\phi_\mu,\sigma][\phi^\mu, \bar{\sigma}]  \bigg) \\ \nonumber
            &-& \bigg( \frac{t-t^{-1}}{g^2_{YM}(t+t^{-1})}\bigg) \int_M Tr(F \wedge F)
\end{eqnarray}

where $t \in \mathbb{CP}^1$ parameterizes the family of supersymmetries $Q_t$ that is important for the GL twist and we have set the $\theta$ angle to zero.
The path integral of the twisted theory localizes on the space of solutions to the following systems of equations on $M_4=\Sigma \times C$ (when $\sigma =0$),
\begin{eqnarray}
\label{kapustinwitten}
(F - \phi \wedge \phi + t D \phi)_+ &=& 0 \\ \nonumber
(F - \phi \wedge \phi - t^{-1} D \phi)_- &=& 0 \\ \nonumber
D^* \phi &=& 0
\end{eqnarray}

The topologically twisted theory continues to have a quantum moduli space of vacua $\mathcal{V}$. These vacua are invariant under the global symmetry $Spin(4)$ of the twisted theory which is a combination of the Lorentz and R-symmetries of the original theory. The considerations of the earlier Section that led us to identify the  space of vacua $\mathcal{V}$ of the theory on $\mathbb{R}^4$ with the space of adjoint semi-simple orbits persist in this twisted case with the caveat that symmetry breaking using the scalar field $\phi$ is now very different from symmetry breaking using the scalar $\sigma$. While both lead to low energy theories with smaller gauge groups, their effect on the 4d TQFT is quite different. In what follows, we actually consider giving VEVs to the scalars $\phi_0,\phi_1,\phi_2,\phi_3$. 

While the S-duality of the physical $\mathcal{N}=4$ theory appears to treat the theory with $G$ and the theory with $G^\vee$ on a somewhat equal footing, the twisted theory and the more refined S-duality map between extended defect operators introduces an asymmetric setup. So, the setup of Kapustin-Witten can be best thought of as leading, in part, to a pair of Geometric Langlands Conjectures, each one of which treats the $G$ and $G^\vee$ side in an asymmetric manner. This is important for us only to the extent that we will now go ahead and pick one of the Geometric Langlands Conjectures that follow and identify the $G^\vee$ side of this conjecture with the Galois side of the correspondence and the $G$ side with the Automorphic side.

In the setup of \cite{kapustin2006electric}, one studies the GL twisted $\mathcal{N}=4$ SYM theory on $\Sigma \times C$ in a regime where the size of $C$ is very small compared to the size of $\Sigma$. For many purposes, the low energy theory in this case can be described as a 2d Sigma Model with target space the Hitchin Moduli space $M_H(G,C)$. The Hitchin moduli space is the space of solutions to the following equations on $C$ \cite{hitchin1987self},
\begin{eqnarray}
\label{hitchin}
F - \phi \wedge \phi = 0 \\ \nonumber
D \phi = 0 \\ \nonumber
D^\star \phi = 0
\end{eqnarray}

This dimensional reduction to the Hitchin sigma model was already studied in \cite{Bershadsky:1995vm,Harvey:1995tg}. The Hitchin moduli space can be thought of as parameterizing particular classes of solutions to the four dimensional equations (\ref{kapustinwitten}) that arise from pullbacks of solutions to (\ref{hitchin}) on $C$ and these constitute the vacua of the effective theory in two dimensions.

Recall now that the Hitchin system has the structure of being a complex integrable system. In particular, it has a map (called the Hitchin fibration)
\begin{equation}
\mu : \mathcal{M}_H \rightarrow \mathcal{H}
\end{equation}
where $\mathcal{H}$ is usually called the Hitchin Base. The Hitchin Base $\mathcal{H}$ is a half-dimensional subspace of $\mathcal{M}_H$ and the generic fibers of $\mu$ are  tori.  

The Global Nilpotent cone $\mathcal{N}_G$ is a geometrical object of critical importance in the Global Geometric Langlands Program and it is defined in a really simple way using the Hitchin fibration $\mu$. We recall its definition here,
\begin{equation}
\mathcal{N}_{G} = \mu^{-1} (0)
\end{equation} 
where $0$ is the ``zero" point of the base $\mathcal{H}$ of Hitchin's fibration. The definition of the Global Nilpotent cone is similar to that of the definition of the Local Nilpotent Cone with the Hitchin fibration playing the role of the Adjoint Quotient Map.

From its definition, it is obvious that the Global Nilpotent Cone is a brane in the Hitchin Sigma Model. In fact, it is a Lagrangian brane \cite{laumon1987analogue,faltings1993stable,ginzburg2001global}. What is not obvious is that there is a way to trace the Global Nilpotent Cone to a four dimensional origin. Below, we will see that strata of the Global Nilpotent Cone do have an origin in specific boundary conditions in four dimensions.

\subsection{Boundary Symmetry Breaking}

We now consider the GL-twisted theory on $\mathbb{R}^3 \times \mathbb{R}^+$ with a supersymmetric boundary condition. It will turn out that the available choices of symmetry breaking at the boundary of the GL twisted theory is larger than those available in the bulk. Let us denote the entire set of boundary vacua to be $\overline{\mathcal{V}}$.

To study $\overline{\mathcal{V}}$, one has to first account for the fact that the asymmetry induced by the GL twist on the six scalars $\phi_i$ is different from that induced by the presence of the boundary which breaks the $\mathcal{R}$-symmetry at the boundary down to $SO(3)_X\times SO(3)_Y$. One of these factors, say $SO(3)_X$, can be embedded in the $Spin(4)$ symmetry of the twisted theory and a choice of such an embedding would constitute the identification of a boundary condition in the twisted theory. So, to describe a boundary condition in the $\mathcal{N}=4$ theory, it is natural to split the scalars into two triplets $\mathbf{X}$ and $\mathbf{Y}$. The triplet $\mathbf{X}$ transforms as the three dimensional representation of $SO(3)_X$ while the triplet $\mathbf{Y}$ transforms as the three dimensional representation of $SO(3)_Y$. 

As we saw in earlier sections, the presence of a quadratic potential for the Higgs fields prevents us from giving a nilpotent VEV in the bulk theory. However, the value that the scalar can take \textit{at} the boundary is not constrained by the form of the scalar potential of the four dimensional field theory. In other words, the scalar can now be valued in any adjoint orbit. In particular, it can be valued in a nilpotent orbit. So, $\overline{\mathcal{V}}$ includes nilpotent orbits while $\mathcal{V}$ does not. To be more precise, one can build a single complex scalar from two of the three scalars that comprise the $SO(3)_X$ triplet $\mathbf{X}$ and allow this complex scalar to take arbitrary nilpotent values on the boundary of $\mathbb{R}^3 \times \mathbb{R}^+$. It is this complex scalar that becomes a part of the Hitchin system on $C$.

The chiral operators that parameterize $\mathcal{B}$ are not suited to detect these additional choices precisely because they are all identically zero for every nilpotent orbit. In more compact terminology, the entire Local Nilpotent Cone is the fiber over ``0'' of the adjoint quotient map (i.e eigenvalues map). The image of the adjoint quotient map is the ring of invariant polynomials and this space can be viewed as $\mathfrak{h}/W$ and the ``0'' of the space is the case where all the invariant polynomials are zero.

So, we takeaway that there is a valid topological boundary condition in the four dimensional theory in which the scalar that enters the Hitchin system can take nilpotent values at the boundary.  Now, we compactify the theory by formulating it on $\Sigma \times C$, where $\Sigma=\mathbb{R}^+ \times \mathbb{R}$. We dimensionally reduce on $C$ and the effective theory in two dimensions can be described using the Sigma model on $\Sigma$ with target the Hitchin moduli space $\mathcal{M}_H(G,C)$. At the boundary of $\Sigma$, one has to prescribe a boundary condition for the Sigma model. In the present case, we are interested in compactifying the boundary condition in four dimensions that involves giving nilpotent VEVs to the scalar on the boundary. In the dimensional reduction, this boundary condition (which can be thought of as a three dimensional modification of the four dimensional theory) is reduced on $C$ leading ultimately to a boundary condition in two dimensions. Since the scalar in four dimensions was fixed to be a particular nilpotent orbit, the reduction forces the scalar to take that nilpotent value globally over $C$. In other words, the resulting brane in the Hitchin Sigma Model in two dimensions is the subspace of the solutions to Hitchin's equations on $C$ with Higgs field taking values in a particular nilpotent orbit\footnote{Here, we are making an assumption that an analog of the vanishing theorems that would allow one to pullback solutions to the Hitchin System on $C$ to honest solutions of the four dimensional equations holds when the scalar takes nilpotent values at the boundary. For the case where the scalar is valued in the principal Dixmier sheet, this argument is presented in \cite{kapustin2006electric}}. This is precisely a stratum of the Global Nilpotent Cone in $M_H(G,C)$. If one varies the four dimensional boundary condition and allows the scalar field to take values in arbitrary nilpotent orbits, then one can get all the strata of the Global Nilpotent Cone as part of  boundary condition(s) in two dimensions. In its origin from four dimensions, when the scalar field is fixed to be a particular nilpotent orbit, the resulting brane in the two dimensional theory is very similar to the brane(s) supported on the fiber at a smooth point of the base $\mathcal{H}$ in that all of these branes can be traced to the choice of a single boundary vacuum of the four dimensional theory. The Electric and Magnetic Eigenbranes of Kapustin-Witten are branes of this type \cite{kapustin2006electric, Witten:2015dta}. It is also useful to note that the boundary conditions that lead to strata of the Global Nilpotent Cone in the two dimensional theory are available on both the Automorphic ($G$) and the Galois ($G^\vee$) sides of the correspondence.

In this framework to study geometric Langlands (further enlarged and developed to more detail by Witten and collaborators), the Local Geometric Langlands Program is expected to be part of the higher dimensional theory while the Global Geometric Langlands is obtained in the dimensionally reduced theory. In the mathematical approaches to the program, both the Nilpotent Cones enter the picture in crucial ways. I will not try to detail this in any way here but refer the reader to  \cite{laumon1987analogue}, \cite{arinkin2015singular} for the Global Nilpotent Cone, to \cite{ginsburg1989admissible,mirkovic2004character} for the Local Nilpotent Cone and to \cite{ben2016betti} for a broader overview. In \cite{ben2016betti}, a Betti version of the Geometric Langlands Program has been introduced and this program is placed in  a mathematical framework that is intrinsically four dimensional in its origins. The appearances of Local and Global Nilpotent Cones presented here are similar to how they appear in \cite{ben2016betti} and further work incorporating these symmetry breaking choices into the setup of \cite{kapustin2006electric}, \cite{gaiotto2008s,Gaiotto:2016hvd} should make the connection precise. We note here that the Global Nilpotent Cone on the Automorphic side has also been discussed in the context of the Conformal Field Theory approaches to Geometric Langlands \cite{frenkel2007lectures,teschner2011quantization}. 

Finally, it must be noted that the strata of the Local Nilpotent Cone can arise in other ways on the boundary of the four dimensional theory. For example, it can enter through the Nahm pole boundary condition or via coupling to three dimensional $\mathcal{N}=4$ SCFTs which have vacuum moduli spaces given by strata in the Local Nilpotent Cone(s). These two ingredients did not play any role in this note. For a complete story, one however needs to take all of these possibilities in the four dimensional $\mathcal{N}=4$ theory into account. One can also further enrich the story by including ramifications at punctures on $C$ along the lines of \cite{gukov2006gauge,Witten:2007td}.

\textit{Acknowledgements} : I would like to thank Jacques Distler for detailed comments on a draft version of this note. I would also like to thank David Ben-Zvi, Ashoke Sen and Joerg Teschner for patient explanations.

\appendix

\section{Counting Levi Subalgebras}
\label{countinglevi}
Let $\mathfrak{g}$ be a complex semi-simple Lie Algebras and $\mathbf{R}$ the associated root system. Classifying the Levi subalgebras of $\mathfrak{g}$ can be turned into a combinatorial problem involving $\mathbf{R}$ by a two step procedure. The first step is

\begin{proposition}  (this is Prop 6.2 of \cite{bala1976classes})  Let $\mathfrak{m}_1$ and $\mathfrak{m}_2$ be two Levi subalgebras in $\mathfrak{g}$ and $\mathbf{R}_1$ and $\mathbf{R}_2$ their associated root systems. Then, they are conjugate under $G$ if and only if $\mathbf{R}_1$, $\mathbf{R}_2$ are conjugate under the Weyl group. 
\label{propa1}
\end{proposition}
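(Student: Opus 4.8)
The plan is to prove the two implications separately after a normalization step that places both Levi subalgebras on a common footing. First I would record the structural facts I intend to use: a Levi subalgebra $\mathfrak{m}$ is the centralizer of a semisimple element (equivalently the Levi factor of a parabolic), so it is reductive and it contains a Cartan subalgebra of $\mathfrak{g}$. Since all Cartan subalgebras of $\mathfrak{g}$ are conjugate under $G$, I may replace $\mathfrak{m}_1$ and $\mathfrak{m}_2$ by individual $G$-conjugates so that both contain one fixed Cartan $\mathfrak{h}$, with fixed root system $\mathbf{R} = \mathbf{R}(\mathfrak{g},\mathfrak{h})$ and Weyl group $W = N_G(T)/T$. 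After this reduction each factor takes the form
\[
\mathfrak{m}_i = \mathfrak{h} \oplus \bigoplus_{\alpha \in \mathbf{R}_i} \mathfrak{g}_\alpha,
\qquad \mathbf{R}_i \subseteq \mathbf{R},
\]
so that $\mathbf{R}_1,\mathbf{R}_2$ are genuine root subsystems of the single system $\mathbf{R}$ and the assertion becomes a clean comparison inside $\mathfrak{h}^*$.

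For the backward implication, suppose $w(\mathbf{R}_1) = \mathbf{R}_2$ for some $w \in W$. I would lift $w$ to a representative $n \in N_G(T)$. Because $\mathrm{Ad}(n)$ preserves $\mathfrak{h}$ and sends each root space to $\mathrm{Ad}(n)\,\mathfrak{g}_\alpha = \mathfrak{g}_{w\alpha}$, it carries $\mathfrak{m}_1$ to $\mathfrak{h} \oplus \bigoplus_{\alpha \in \mathbf{R}_1}\mathfrak{g}_{w\alpha} = \mathfrak{h}\oplus\bigoplus_{\beta \in \mathbf{R}_2}\mathfrak{g}_\beta = \mathfrak{m}_2$. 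This exhibits the desired $G$-conjugacy and is essentially immediate once the normalization is in place.

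The forward implication carries the real content. Given $g \in G$ with $\mathrm{Ad}(g)\,\mathfrak{m}_1 = \mathfrak{m}_2$, the element $g$ need not normalize $\mathfrak{h}$, so I cannot read off a Weyl element directly. The idea is to correct $g$ inside $\mathfrak{m}_2$. Both $\mathfrak{h}$ and $\mathrm{Ad}(g)\,\mathfrak{h}$ are Cartan subalgebras of the reductive algebra $\mathfrak{m}_2$; by conjugacy of Cartan subalgebras within $\mathfrak{m}_2$ there is $m \in M_2$ (the connected subgroup with Lie algebra $\mathfrak{m}_2$) with $\mathrm{Ad}(m)\,\mathrm{Ad}(g)\,\mathfrak{h} = \mathfrak{h}$. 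Setting $g' = mg$, the map $\mathrm{Ad}(g')$ now preserves $\mathfrak{h}$, so $g' \in N_G(T)$ and induces a genuine Weyl element $w$; since $m \in M_2$ preserves $\mathfrak{m}_2$ we still have $\mathrm{Ad}(g')\,\mathfrak{m}_1 = \mathfrak{m}_2$. Matching root spaces on both sides of this equality then forces $w(\mathbf{R}_1) = \mathbf{R}_2$, which closes the argument.

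The step I expect to be the main obstacle is the correction of $g$ by an element of $M_2$: it relies on $\mathfrak{m}_2$ being reductive, so that conjugacy of Cartan subalgebras applies internally, and on the corrected $g'$ landing in $N_G(T)$ so that it induces an element of $W$ rather than an outer (diagram) automorphism. The latter point is guaranteed precisely because $G$ is connected and every conjugation used is inner, so no graph automorphism of $\mathbf{R}$ can intrude; this is exactly why the equivalence is with $W$-conjugacy and not with the a priori larger automorphism group $\mathrm{Aut}(\mathbf{R})$. I would take care to invoke conjugacy of Cartan subalgebras in the form valid for the reductive Levi $\mathfrak{m}_2$, and to check that the root-space matching in the last step is an equality of subsystems and not merely an inclusion.
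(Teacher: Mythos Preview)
Your argument is correct and is essentially the standard proof (the one in Bala--Carter, to which the paper defers): normalize so both Levis contain a fixed Cartan, get the easy direction by lifting a Weyl element to $N_G(T)$, and for the other direction correct the conjugating element by an element of the Levi subgroup so that it normalizes $\mathfrak{h}$ and hence represents a Weyl group element. The paper itself does not supply a proof of this proposition; it simply quotes it as Prop.~6.2 of \cite{bala1976classes}, so there is nothing further to compare against. Your care about the ``inner versus outer'' issue is well placed and is exactly why the reduction lands in $W$ rather than in $\mathrm{Aut}(\mathbf{R})$.
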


In the mathematical literature, it has become standard terminology to name the root systems arising from Levi subalgebras a \textit{parabolic} subsystem of $\mathbf{R}$ \cite{bala1976classes}. This is potentially confusing since at no point in the discussion did we actually use a parabolic subalgebra of $\mathfrak{g}$. But, we will adopt this terminology. Now, the second step in classifying Levi subalgebras concerns these parabolic root systems.
\begin{proposition} (this is Prop 6.3 of \cite{bala1976classes}) Two parabolic subsystems $\mathbf{R}_1$ and $\mathbf{R}_2$ are conjugate under the Weyl group $W(\mathfrak{g})$ if and only if their Dynkin diagrams are identical except in the following cases 
\label{propa2}
\begin{enumerate}
\item If $\mathbf{R}$ is of type $D_{2n} (n \geq 2)$, then we have two non-conjugate systems of type $A_{i_1} + A_{i_2} + \cdots $ where $i_1,i_2,i_3 \ldots i_k$ are odd integers that satisfy $(i_1 +1) + (i_2+1) + \ldots + (i_k+1) = 2n$,
\item If $\mathbf{R}$ is of type $E_7$, then there are two non-conjugate systems of type $A_3+A_1,A_5$ and $3A_1$.
\end{enumerate}
\end{proposition}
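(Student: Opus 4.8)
The plan is to prove the two directions separately, with the forward (necessity) direction being immediate and essentially all the content lying in the converse. If $\mathbf{R}_1$ and $\mathbf{R}_2$ are $W(\mathfrak{g})$-conjugate, then any $w$ carrying one to the other is an isometry sending a base to a base and preserving Cartan integers, so the Dynkin diagrams agree; this disposes of the ``only if'' direction with no case analysis. For the ``if'' direction I would first reduce to \emph{standard} parabolic subsystems: by the classification of Levi subalgebras (equivalently, by Proposition \ref{propa1} together with the standard fact that every Levi is conjugate to one containing a fixed Cartan and a subset of the simple root spaces), every parabolic subsystem is $W(\mathfrak{g})$-conjugate to one of the form $\mathbf{R}_I$ generated by a subset $I$ of a fixed base of simple roots. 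The proposition then becomes the concrete question: given $I,J$ with $\mathbf{R}_I \cong \mathbf{R}_J$ as abstract root systems, when is $\mathbf{R}_I$ carried to $\mathbf{R}_J$ by some element of $W(\mathfrak{g})$?

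The next reduction isolates the obstruction. Suppose $w \in W(\mathfrak{g})$ satisfies $w(\mathbf{R}_I)=\mathbf{R}_J$. Since the Weyl group $W(\mathbf{R}_J)$ of the subsystem is generated by reflections in roots of $\mathbf{R}$ and hence lies in $W(\mathfrak{g})$, and since it acts simply transitively on the bases of $\mathbf{R}_J$, I may post-compose $w$ so that $w(I)=J$ as bases; conversely any such $w$ witnesses conjugacy. Hence $\mathbf{R}_I$ and $\mathbf{R}_J$ are conjugate precisely when the diagram isomorphism matching the two identical Dynkin diagrams is the restriction of an element of $W(\mathfrak{g})$. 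Fixing $I=J$, the relevant invariant is the image of the natural map $N_{W(\mathfrak{g})}(\mathbf{R}_I)/W(\mathbf{R}_I) \to \mathrm{Aut}(\text{diagram of }\mathbf{R}_I)$: two realizations of the same type are non-conjugate exactly when they differ by a diagram automorphism lying \emph{outside} this image. So the whole problem becomes the computation, type by type for $\mathbf{R}$, of which automorphisms of a sub-diagram are induced by the ambient Weyl group.

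I would then run the case analysis. For most $\mathbf{R}$ the induced automorphisms exhaust $\mathrm{Aut}(\text{diagram})$ — for instance when $-1 \in W(\mathbf{R})$ and the component symmetries are realized by Weyl elements acting on disjoint orthogonal pieces — so the diagram type is a complete invariant. The genuine failure in type $D_{2n}$ I would extract from the coordinate model $\mathbf{R} \subset \{\pm e_a \pm e_b\}$, with $W(D_{2n})$ acting by coordinate permutations together with \emph{even} sign changes. A subsystem of type $A_{i_1}+\cdots+A_{i_k}$ can be realized using only difference roots $e_a - e_b$, occupying $i_j+1$ coordinates per block; when $\sum_j (i_j+1)=2n$ these blocks exhaust all $2n$ coordinates. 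A second realization is obtained by flipping the sign of a single coordinate, and because every coordinate is already in use, no permutation helps and the only ambient element that could identify the two is a single sign change, which is odd and hence outside $W(D_{2n})$. With all $i_j$ odd the block sizes $i_j+1$ are even, locking this parity obstruction in place and producing exactly the stated pair; when $\sum_j(i_j+1)<2n$ a leftover coordinate absorbs a compensating sign change and the two realizations merge, which is why the exception is confined to the filling case.

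The main obstacle, and the part demanding genuine labor, is the sporadic $E_7$ statement: that the types $A_3+A_1$, $A_5$ and $3A_1$ each occur in two non-conjugate ways. Here no clean parity argument is available, and I would settle it by explicit computation in the $E_7$ root system — via the Borel--de Siebenthal procedure of passing to extended Dynkin diagrams to enumerate all subsystems of each type, followed by a direct test of $W(E_7)$-conjugacy (equivalently, checking whether the orthogonal-complement subsystem distinguishes the two embeddings). This finite but delicate verification, together with confirming that no analogous splitting occurs for the other exceptional and classical types, is where the real work sits; the conceptual scaffolding of the previous paragraphs serves only to reduce everything to it.
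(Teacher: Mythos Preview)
The paper does not prove this proposition: it is quoted verbatim as Proposition~6.3 of Bala--Carter and used only as input to the enumeration of Levi subalgebras. There is therefore no proof in the paper to compare your proposal against.

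For what it is worth, your sketch follows the standard route that Bala--Carter themselves take. The reduction to standard parabolic subsystems $\mathbf{R}_I$ and the observation that conjugacy amounts to finding $w\in W(\mathfrak{g})$ with $w(I)=J$ as bases is exactly how the problem is set up, and the $D_{2n}$ parity argument via even sign changes is the classical mechanism behind that exception. Two places where your write-up would need tightening before it counts as a proof: first, the passage through ``the image of $N_{W}(\mathbf{R}_I)/W(\mathbf{R}_I)\to\mathrm{Aut}(\text{diagram})$'' is not quite the right invariant---the question is not which automorphisms of a \emph{fixed} $\mathbf{R}_I$ are inner to $W$, but whether two \emph{different} subsets $I,J$ of the same type lie in one $W$-orbit, and these are related but not identical questions. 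Second, in the $D_{2n}$ case you explain why the filling condition $\sum(i_j+1)=2n$ matters but not why the oddness of every $i_j$ is required when the blocks already fill; if some $i_j$ is even one must exhibit the compensating Weyl element, and your sentence ``locking this parity obstruction in place'' asserts rather than shows this. The $E_7$ verification is, as you say, irreducibly a finite check, and acknowledging that is honest; distinguishing the two classes by their orthogonal complements in $\mathbf{R}$ is indeed the cleanest way to do it.
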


So, it follows that in a vast number of cases, the problem of finding Levi subalgebras $\mathfrak{g}$ reduces to the problem of finding distinct sub-diagrams of the Dynkin diagram. The exceptions to this are listed above in Proposition \ref{propa2}. In these case, to denote the Levi subalgebra completely, primes are introduced to distinguish the two distinct Levis that share the same sub-diagram of the Dynkin diagram. This is the reason for the appearance of primes in the labels for certain Levi subalgebras in the case of $\mathfrak{g}=D_4, E_7$ in the main text.
\bibliography{coulombbranch}
\bibliographystyle{alphanum} 
\end{document}